\newcommand{\Rev}[1]{\textcolor{black}{#1}}
\newtheorem{theorem}{Theorem}
\begin{document}

\title{An Uncertainty-Aware Performance Measure for Multi-Object Tracking}

\author{Juliano Pinto, Yuxuan Xia, \textit{Student Member, IEEE}, Lennart Svensson, \textit{Senior Member, IEEE}, and Henk Wymeersch, \textit{Senior Member, IEEE}
\thanks{This work was supported, in part, by a grant from the Chalmers AI Research Centre Consortium (CHAIR).}
\thanks{The authors are with the Department of Electrical Engineering, Chalmers University of Technology, Sweden.  (e-mail: juliano@chalmers.se).}}

\maketitle

\begin{abstract}
Evaluating the performance of multi-object tracking (MOT) methods is not straightforward, and existing performance measures fail to consider all the available uncertainty information in the MOT context. 
This can lead practitioners to select models which produce uncertainty estimates of lower quality, negatively impacting any downstream systems that rely on them. Additionally, most MOT performance measures have hyperparameters, which makes comparisons of different trackers less straightforward. 
We propose the use of the negative log-likelihood (NLL) of the \Rev{multi-object posterior} given the set of ground-truth objects as a performance measure. This measure takes into account all available uncertainty information in a sound mathematical manner without hyperparameters. 
We provide efficient algorithms for approximating the computation of the NLL for several common MOT algorithms, show that in some cases it decomposes and approximates the widely-used GOSPA metric, and provide several illustrative examples highlighting the advantages of the NLL in comparison to other MOT performance measures.
\end{abstract}

\begin{IEEEkeywords}
Multitarget tracking, Multi-object tracking, Performance measure, Uncertainty evaluation.
\end{IEEEkeywords}

\IEEEpeerreviewmaketitle

\vspace{-4mm}

\section{Introduction}
\label{sec:introduction}

Multi-object tracking (MOT) is the task of tracking an unknown number of objects through time using noisy measurements, with important applications in various areas \cite{bar2004estimation,challa2011fundamentals,meyer2018message,streit2021analytic,mahler2007statistical}.
The main challenge for MOT is the unknown correspondence between objects and measurements, which makes it necessary for the algorithms to infer such information 
\cite{bar1990multitarget}.
Additionally, in many applications it is important that tracking systems provide accurate uncertainty estimates of their outputs, so that decision-making systems can take robust actions \cite{uncertainty_in_machine_learning}.
%
Evaluating the quality of 
MOT methods is also challenging, 
due to the lack of knowledge about the correct correspondence between the ground-truth set of object states and the tracker's state estimates and the associated uncertainties \cite{evaluating_mot_performance}. 
%

\Rev{Several performance measures have been proposed for MOT, including the Hausdorff metric \cite{wasserstein_hausdorff_metrics}, the Wasserstein metric \cite{wasserstein_hausdorff_metrics}, multi-object tracking accuracy and precision (MOTA and MOTP) \cite{clearmot}, higher-order tracking accuracy (HOTA) \cite{luiten2021hota}, optimal subpattern assignment (OSPA) \cite{schuhmacher2008consistent}, and generalized optimal subpattern assignment (GOSPA) \cite{rahmathullah2017generalized}. MOTA and MOTP are prevalent in computer vision tasks, such as multiple people tracking \cite{mot20}, while OSPA and GOSPA are the most widely applied for general MOT problems}. These performance measures cope with the unknown associations between estimates and objects by relying on minimum-cost associations based on a user-defined distance measure. 
None of these methods can assess the quality of the uncertainty estimates, and existing measures such as normalized estimation error squared \cite{nees} are not easily applicable to MOT due to the unknown data associations.

To address this, there have been efforts to extend the MOT performance measures to incorporate uncertainty information, but they only evaluate some of the available uncertainty \cite{nagappa2011incorporating,he2013track}, and/or rely on proxies for ground-truth state uncertainties which are not applicable to general MOT applications \cite{he2013track}. To the best of our knowledge, no general, mathematically sound incorporation of all the uncertainties in MOT has been proposed in a performance measure. 
\Rev{Additionally, most MOT performance measures have hyperparameters (e.g., distance metric and thresholds). 
Such choices are often non-trivial, difficult to generalize to new contexts, and make comparisons between different trackers less straightforward. Moreover, MOT methods themselves may have hyperparameters, optimized for a certain metric, leading to a chicken-and-egg-problem.}

In this letter, we introduce a new MOT performance measure that incorporates all uncertainties in a sound mathematical manner, while at the same time having zero hyperparameters: the negative log-likelihood (NLL) of the model (i.e., the MOT method posterior), given the ground-truth objects.
NLL has been widely applied in 
statistics \cite{severini2000likelihood} and optimization \cite{myung2003tutorial}, but not yet in MOT. 
We provide the following contributions. First, we propose the use of the NLL as a new MOT performance measure, and provide efficient algorithms for computing/approximating it for a number of common MOT algorithms based on random finite sets (RFSs): (C)PHD \cite{mahler2003multitarget,mahler2007phd}, PMBM \cite{garcia2018poisson}, PMB \cite{williams2015marginal}, MBM \cite{garcia2019gaussian}, and MBM$_{01}$ \cite{vo2013labeled}. Second, we show that for the PMB MOT family, the NLL decomposes into separate terms that provide additional transparency into the performance, and that under certain assumptions, NLL is closely related to GOSPA. Lastly, we provide illustrative examples to highlight the advantages of NLL. 

\subsubsection*{Notations}
Scalars and vectors are denoted by lowercase or uppercase letters with no special typesetting $x$, matrices by uppercase boldface letters $\mathbf X$, and sets by uppercase blackboard-bold letters $\mathbb X$. In addition, we define $\mathbb N_a = \{i \in \mathbb N~|~i\leq a\}, a\in\mathbb N$. 

\vspace{-4mm}
\section{NLL as a performance measure}
In this section, we provide the definition of NLL, along with examples on how to compute it efficiently for different families of MOT densities. Given a ground-truth set of object states $\mathbb Y=\{y_1, \cdots, y_{|\mathbb Y|}\}$ and a posterior density of the tracked objects $f_{\mathsf{M}}(\cdot)$ from method $\mathsf{M}$ \Rev{(i.e., the multi-object posterior, which describes the distribution of the set of object states)}, the NLL is defined as 
\begin{align}
    \label{eq:nll_definition}
    \mathrm{NLL}(\mathbb Y,f_{\mathsf{M}})=-\log f_{\mathsf{M}}(\mathbb Y).
\end{align}
We can then rank different algorithms (say $\mathsf{M}_1$ and $\mathsf{M}_2$) by comparing $\mathrm{NLL}(\mathbb Y,f_{\mathsf{M}_1})$ and $\mathrm{NLL}(\mathbb Y,f_{\mathsf{M}_2})$
for the same set $\mathbb Y$, or by computing an expectation with respect to different trials. 
\Rev{This type of comparison naturally incorporates the uncertainty information estimates by the trackers, since to score well algorithms must have most of the mass of their posterior in regions where it is likely that the objects in $\mathbb Y$ will be, without being overly confident (see Section \ref{sec:examples})}. We now provide examples of several important MOT filters and show how the NLL can be computed. 

\subsubsection{(C)PHD Filters}
(C)PHD filters \cite{mahler2003multitarget,mahler2007phd} have been widely used due to their low complexity, simple implementation, and relatively good performance. The multi-object posterior for a CPHD filter $f_\text{CPHD}(\mathbb{X})$ takes the form
\begin{equation}
    \label{eq:cphd_posterior}
    f_\text{CPHD}(\mathbb X) = |\mathbb{X}|!p(|\mathbb{X}|)\prod_{x\in\mathbb{X}}s(x),
\end{equation}
where $p(|\mathbb{X}|)$ is the cardinality distribution (a Poisson distribution for the PHD filter) of $\mathbb{X}$, and $s(x)$ is the single-object state density.  Computing $\mathrm{NLL}(\mathbb Y,f_{\text{CPHD}})$ according to \eqref{eq:nll_definition} therefore yields the expression
\begin{align}
    \mathrm{NLL}(\mathbb Y,f_{\text{CPHD}}) = -\log(|\mathbb Y|!)-\log p(|\mathbb{Y}|) -\sum_{y\in\mathbb{Y}} \log s(y),\notag
\end{align}
which can be computed with complexity $\mathcal{O}(|\mathbb{Y}|)$. 

\subsubsection{PMBM Filters}
\label{subsec:pmbm_likelihood}
Poisson multi-Bernoulli mixture (PMBM) filters \cite{williams2015marginal,garcia2018poisson} are the optimal solution to MOT with standard multi-object dynamic and measurement models with Poisson birth \cite[Chap. 13]{mahler2007statistical}. The multi-object posterior for a PMBM filter is defined as:
\begin{align}
    \label{eq:pmbm_dist}
    f_\text{PMBM}(\mathbb X) = \sum_{\mathbb{X}^{\text{U}} \uplus \mathbb{X}^{\text{D}}=\mathbb X}f_\text{PPP}(\mathbb{X}^{\text{U}})f_\text{MBM}(\mathbb{X}^{\text{D}})
    \\
    \label{eq:ppp_dist}
    f_\text{PPP}(\mathbb X)=\exp\Big(-\int\lambda(x')\mathrm{d}x'\Big)\prod_{x\in\mathbb X}\lambda(x)
    \\
    \label{eq:mbm_dist}
    f_\text{MBM}(\mathbb X)=\sum_{h=1}^H w_h\sum_{\uplus_{j=1}^{m}\mathbb X_j=\mathbb X}\prod_{k=1}^{m}f_k^h(\mathbb X_k)
\end{align}
where $\lambda(\cdot)$ is the intensity function of the Poisson point process (PPP), $w_h$ are the weights of each of the $H$ MB components of the MBM ($\sum_h w_h = 1$), $m$ is the number of Bernoulli components in each of the MB components of the MBM (set as identical for each mixture component without loss of generality), and $f_k^h(\mathbb X_k)$ is the $k$-th Bernoulli density of the $h$-th hypothesis, with
\begin{align}
    \label{eq:bernoulli_dist}
    f_k^h(\mathbb X_k)=\begin{cases}
    1-r_k^h~, & \text{if }\mathbb X_k = \emptyset \\
    r_k^h p_k^h(x)~, & \text{if }\mathbb X_k = \{ x\} \\
    0 & \text{otherwise}.
    \end{cases}
\end{align}
where $r_k^i$ is the existence probability, and $p_k^i(\cdot)$ is the single-object density.

When evaluating $f_\text{PMBM}(\mathbb Y)$,  \eqref{eq:pmbm_dist} can be interpreted as summing the likelihoods of all the possible assignments between the elements of $\mathbb Y$ and either the PPP component or one of the Bernoulli components of the PMB density, for each hypothesis $h$.
Since the number of such possible assignments grows super-exponentially in $\mathbb{X}$, computing the NLL has a complexity that also grows at the same rate in $\mathbb Y$. However, among all these assignments, generally only a few contribute significantly to the overall sum, and the likelihood can be approximated by neglecting all other terms (e.g., when the ground-truth objects in $\mathbb Y$ are reasonably well-separated and so are the $p_k^h$ for each $h$). 

To find such terms for each hypothesis $h$, we solve an optimal assignment problem for $\bm{A}^{h}\in \{0,1\}^{(m+|\mathbb Y|) \times |\mathbb Y|}$ \cite[Chap. 7]{bar1990multitarget}:
\begin{subequations}
\label{eq:assignment_problem}
\begin{align}
    \min_{\bm{A}^h} \quad & \sum_k\sum_{l} C^{h}_{k, l}A^{h}_{k,l}
    \\
    \label{eq:assignment_problem_constraints}
    \text{s.t.} \quad & \sum_{k=1}^{m+|\mathbb Y|} A^{h}_{k,l}=1,\, \sum_{l=1}^{|\mathbb Y|} A^{h}_{k,l}\leq 1,
\end{align}
\end{subequations}
where $\bm{C}^h$ is a cost matrix defined as
\begin{equation}
    \label{eq:cost_matrix}
    C^{h}_{k, l} = \begin{cases} 
    -\log\left(\frac{p^{h}_k(y_l)}{1-r^{h}_k}r^{h}_k\right), &\text{if }k\leq m
    \\
    -\log\lambda(y_l),& \text{if } k=l+m
    \\
    \infty,& \text{otherwise,}
    \end{cases}
\end{equation} 
and $\mathbf A^h$ is the assignment matrix between ground-truth objects and the components of the PMBM. If $[\mathbf A^h]_{i,j}=1$, then $y_j$ is assigned to the $i$-th component of the PMBM, where $1\leq i\leq m$ corresponds to the $m$ Bernoulli components in hypothesis $h$, and all $i>m$ to the PPP component.

Murty's algorithm \cite{assignment_problem_complexity} allows for efficient computation of the $Q$-lowest cost associations $\bm{A}^{h,*}_1, \cdots, \bm{A}^{h,*}_Q$ to this assignment problem. We find that
\begin{align}
    \label{eq:approximated_nll_pmbm}
    &\mathrm{NLL}(\mathbb Y,f_{\text{PMBM}}) \approx  \int\lambda(y')\mathrm{d}y'\\
    &
    -\log \Big( \sum_{h=1}^H\sum_{q=1}^Q w_h \prod_{y\in\mathbb{ Y}^{\text{U}}(\bm{A}_q^{h,*})}\lambda(y)\prod_{k=1}^{m}f_k^h(\mathbb{Y}_k(\bm{A}_q^{h,*})) \Big) \notag
\end{align}
where $\mathbb Y_k(\mathbf A_q^{h,*})=\{y_j\in\mathbb Y ~|~ [\mathbf A_q^{h,*}]_{k,j}=1\}$, $\mathbb Y^\text{U}(\mathbf A_q^{h,*})=\mathbb Y \setminus \cup_{i=1}^m \mathbb Y_i(\mathbf A_q^{h,*})$. 
The worst-case time complexity of this approximation scales as $\mathcal{O}(HQ (m+|\mathbb{Y}|)^3)$ \cite{assignment_problem_complexity}. 
\subsubsection{Other Common MOT Filters}
The same assignment problem defined in \eqref{eq:assignment_problem} can be used to efficiently compute other special cases of the PMBM density. For instance, the PMB density is a special case of the PMBM with a single MB component ($H=1$) \cite{mahler2007statistical}, and the MBM density is a PMBM where $\lambda(x)=0,\,\forall x$ \cite{mahler2007statistical}. Additionally, the MBM$_{01}$ density is a special case of the MBM density when existence probabilities $p_k^i$ of all the Bernoulli components are set to either 0 or 1 \cite{garcia2018poisson}. For all these densities, the multi-object posterior has the same form as \eqref{eq:pmbm_dist}, and can therefore be efficiently approximated using \eqref{eq:approximated_nll_pmbm}. \Rev{Finally, any MOT filter that produces a set of predictions with state uncertainties and/or existence probabilities (such as some deep-learning-based methods, e.g.,  \cite{deep_learning_detection_with_uncertainty})
can be seen as having a PMBM density posterior with $H=1$ and $\lambda(x)=0~\forall x$, and therefore its NLL can also be approximated by \eqref{eq:approximated_nll_pmbm}}. 

\vspace{-3mm}

\section{Decomposition of the NLL for PMB densities}
An attractive property for a performance measure in MOT is that it decomposes into meaningful terms, which can then be individually analyzed for providing additional insights into the types of errors made by the algorithms. In this section we show that the NLL for PMB densities decomposes into three separate terms that depend on the ability of the posterior to explain matched objects, missed objects, and false detections, while taking into account all the uncertainties in the posterior. In addition, we provide a connection between the decomposed form of the NLL and the GOSPA metric, showing the under certain conditions the latter is a special case of the former, up to additive offsets.

\vspace{-3mm}
\subsection{Decomposing the NLL}
As a special case of \eqref{eq:approximated_nll_pmbm} with $H=1$ and $Q=1$, the negative log-likelihood (NLL) of a PMB density can be approximated as:
\begin{align}
    \label{eq:pmb_nll}
    &\mathrm{NLL}(\mathbb Y,f_{\text{PMB}}) \approx  \\
    & \min_{\mathbf{A}\in\mathbb A}-\sum_{k=1}^m \log f_k(\mathbb Y_k(\mathbf{A}))+\int\lambda(y')\mathrm{d}y' -\sum_{y\in\mathbb Y^{\text{U}}(\mathbf{A})}\log\lambda(y)~, \notag
\end{align}
where we drop the dependency on $h$ and $q$, and where $\mathbb A$ is the set of all matrices in $\mathbb \{0, 1\}^{(m+|\mathbb Y|)\times|\mathbb Y|}$ that satisfy \eqref{eq:assignment_problem_constraints}.
We can also express the optimization in \eqref{eq:pmb_nll} in terms of assignment sets $\gamma$, i.e., the set of matched indices $(i,j)$, where $\gamma(\mathbf A)=\left\{(i,j)\in \mathbb N_m\times\mathbb N_{|\mathbb Y|}~|~ [\mathbf A]_{i,j}=1\right\}$. 
This change, together with \eqref{eq:bernoulli_dist}, yields
\begin{align}
    \label{eq:pmb_nll_decomposition}
    &\textrm{NLL}(\mathbb Y,f_\text{PMB})\approx\min_{\gamma\in\Gamma}
    \underbrace{-\sum_{(i,j)\in\gamma}\log \big(r_ip_i(y_j)\big)}_\text{Localization}
    \\
    &\underbrace{-\sum_{i\in\mathbb F(\gamma)}\log(1-r_i)}_\text{False detections}~
    \underbrace{+\int \lambda(y')\textrm{d}y'-\sum_{j\in\mathbb M(\gamma)}\log\lambda(y_j)}_\text{Missed objects}~,\notag
\end{align}
where $\Gamma$ is the set of all possible assignment sets, $\mathbb F(\gamma)=\left\{ i\in\mathbb N_m ~|~ \nexists~j:(i,j)\in\gamma \right\}$ is the set of indices of the Bernoullis not matched to any ground-truth, and $\mathbb M(\gamma)=\left\{j\in\mathbb N_{|\mathbb Y|}~|~\nexists~i: (i,j)\in\gamma\right\}$ is the set of indices of ground-truths not matched to any Bernoulli component. 
This decomposition separates the NLL score of a PMB density into three parts: the \textit{localization} part, which measures how well the Bernoulli components explain the matched ground-truth, the \textit{false detections} part, which measures the the existence probabilities for unmatched Bernoullis, and the \textit{missed objects} part, which measures how well the PMB's PPP component explains the objects not matched to any Bernoulli component. Such decomposition allows practitioners to identify which types of mistakes the algorithms being evaluated are committing.
\vspace{-3mm}

\subsection{Connection to GOSPA}
We proceed to show that  \eqref{eq:pmb_nll_decomposition} can be related to the GOSPA metric \cite{rahmathullah2017generalized}. Recall that  the GOSPA metric ($p=1$, $\alpha = 2$) between two sets $\mathbb X$ and $\mathbb Y$ is
\begin{equation*} 
    \mathrm{GOSPA}_c(\mathbb X, \mathbb Y) = \min_{\gamma\in\Gamma} \sum_{(i,j)\in\gamma}D(x_i, y_j) + \frac{c}{2}(|\mathbb X|+|\mathbb Y|-2|\gamma|),
\end{equation*}
where $D(\cdot,\cdot)$ is a metric on the state-space and $c$ is a cut-off distance. Observe that similarly to NLL, GOSPA decomposes into individual terms for the quality of the localized objects and penalties for false detections and missed objects, but does not account for existence probabilities or object densities. This similarity can be made explicit in the following theorem.  
\begin{theorem}
If (i) $\lambda(x)={\bar\lambda}/{V} \le 1$, inside a field-of-view with volume $V$ (0 elsewhere), (ii)  $r_i=\rho$, $\forall i$, (iii) $p_i(x) =  \exp(-D(x_i,x))/k$, $\forall i$, where $D$ is a translation-invariant metric,\footnote{Translational invariance is needed to allow $k$ to be independent of $x_i$.} $k$ is the normalization constant, and (iv) $\log(1-\rho)=\log(\bar\lambda/V)\doteq -c/2$, then 
\begin{align} 
        &\mathrm{NLL}(\mathbb Y,f_\text{PMB}) = V(1-\rho)
        \\
        &+ \min_\gamma\sum_{(i,j)\in\gamma}\Big(D(x_i, y_j)+\log \frac{k}{\rho}\Big) + \frac{c}{2} (|\mathbb X| +|\mathbb Y|- 2|\gamma|) \notag
\end{align}
where $\mathbb{X}=\{x_1,\ldots, x_m\}$. 
\end{theorem}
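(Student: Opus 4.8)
The plan is to substitute assumptions (i)--(iv) directly into the decomposed PMB negative log-likelihood \eqref{eq:pmb_nll_decomposition} and to verify, summand by summand, that the $\gamma$-dependent part collapses exactly onto the displayed $\mathrm{GOSPA}_c$ objective (with a shifted per-match cost) while the remaining constant equals $V(1-\rho)$. No new machinery is required; the argument is a guided simplification plus one feasibility observation.

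First I would treat the \emph{localization} summand. Under (ii)--(iii), each matched pair contributes $-\log\big(r_i p_i(y_j)\big) = -\log\rho + D(x_i,y_j) + \log k$, so that $-\sum_{(i,j)\in\gamma}\log\big(r_ip_i(y_j)\big) = \sum_{(i,j)\in\gamma}\big(D(x_i,y_j)+\log(k/\rho)\big)$. Translation invariance of $D$ is precisely what makes $k=\int \exp(-D(x_i,x))\,\mathrm{d}x$ independent of $x_i$ (as the footnote notes), so no hidden $x_i$-dependence re-enters here.

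Next I would handle the two penalty summands using (i) and (iv). Assumption (i) gives $\int\lambda(y')\,\mathrm{d}y' = \bar\lambda$, and (iv) rewrites this as $\bar\lambda = V\exp(-c/2) = V(1-\rho)$, which is the claimed additive offset; moreover $-\log(1-r_i) = -\log(1-\rho) = c/2$ for every $i\in\mathbb F(\gamma)$ and $-\log\lambda(y_j) = -\log(\bar\lambda/V) = c/2$ for every $j\in\mathbb M(\gamma)$. Combining this with the counting identities $|\mathbb F(\gamma)| = m-|\gamma| = |\mathbb X|-|\gamma|$ and $|\mathbb M(\gamma)| = |\mathbb Y|-|\gamma|$ --- which follow from the assignment constraints \eqref{eq:assignment_problem_constraints}, since each ground truth is assigned exactly once and each Bernoulli at most once, making $\gamma$ a partial matching of size $|\gamma|$ --- the ``false detections'' plus ``missed objects'' parts sum to $V(1-\rho) + \tfrac{c}{2}\big(|\mathbb X|+|\mathbb Y|-2|\gamma|\big)$.

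Finally I would pull the $\gamma$-independent term $V(1-\rho)$ out of the minimization and collect the rest to obtain the stated identity; comparison with the displayed GOSPA expression then exhibits the NLL as $\mathrm{GOSPA}_c(\mathbb X,\mathbb Y)$ plus the constant $V(1-\rho)$ and a per-match shift $\log(k/\rho)$. The step needing the most care is not algebraic but a domain check: because $\lambda$ vanishes outside the field of view, the missed-objects term is finite only if every unmatched $y_j$ lies inside it, so I would either assume $\mathbb Y$ lies inside the field of view or simply note that any $\gamma$ leaving an exterior object unmatched has infinite cost and is thus discarded by the $\min$. I would also remark that (i), $\bar\lambda/V\le 1$, is exactly the condition making $-c/2=\log(\bar\lambda/V)\le 0$, i.e.\ $c\ge 0$, so the resulting expression is a bona fide GOSPA cost.
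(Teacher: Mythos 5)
Your proposal is correct and follows essentially the same route as the paper's proof in Appendix~\ref{sec:ProofThm1}: substitute (i)--(iv) into \eqref{eq:pmb_nll_decomposition}, use the counting identities $|\mathbb F(\gamma)|=|\mathbb X|-|\gamma|$ and $|\mathbb M(\gamma)|=|\mathbb Y|-|\gamma|$, and pull the constant $\bar\lambda=V(1-\rho)$ out of the minimization. Your added remarks on the field-of-view domain check and on $c\ge 0$ are sensible clarifications but do not change the argument.
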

\begin{proof}
See Appendix \ref{sec:ProofThm1}. 
\end{proof}
The necessary conditions can be interpreted as (i) the Poisson intensity function takes a constant value over the field-of-view with volume $V$; (ii) all the Bernoulli existence probabilities are identical; (iii) the state distribution for all the Bernoulli components of the PMB are set to the same functional form; (iv) there is a relation between the existence probabilities and the Poisson intensity. This result shows that when such conditions hold, GOSPA is a special case of the NLL performance measure up to certain additive constants. 

\vspace{-3mm}
\section{Illustrative examples} \label{sec:examples}
In this section, we study the properties of the NLL performance measure in two illustrative examples and compare it to \Rev{three popular MOT performance measures: MOTA \cite{clearmot}, MOTP \cite{clearmot}, and GOSPA \cite{rahmathullah2017generalized}. We use $2.0$ as the cutoff distance for establishing matches, Euclidean distance to compute errors, and $\alpha=2$, $p=1$ for the GOSPA hyperparameters.}
\subsubsection*{Example 1}
Consider two trackers, $\mathsf{M}_1$ and $\mathsf{M}_2$ in a MOT problem where the state is 2-dimensional position, and $\mathbb Y=\{(2, 5), (6, 3)\}$. \Rev{Both trackers have a posterior in the form of a two-component MB with Gaussian state densities (which can be seen as PMBM filters), illustrated in Fig.\,\ref{fig:example_1} along with the corresponding existence probability for each component}.
$\mathsf{M}_1$'s predictions are better than $\mathsf{M}_2$'s: its Bernoulli component closer to A is more certain of the object's existence, and the state density for the component closer to B explains the ground-truth position significantly better than the component from $\mathsf{M}_2$. \Rev{In order to compute MOTA, MOTP, and GOSPA scores for each tracker}, one would first need to extract ``hard'' estimates from these posteriors. A common practice is to use the means of the Bernoulli state densities \cite{garcia2018poisson}, resulting in prediction sets $\mathbb X_{\mathsf{M}_1}=\{(3, 5), (7,4)\}$ and $\mathbb X_{\mathsf{M}_2}=\{(1, 5), (5, 2)\}$. Since the ground-truth positions are equidistant to these estimates, all such performance measures rank $\mathsf{M}_1$ and $\mathsf{M}_2$ equally: \Rev{$\text{MOTA}=1$, $\text{MOTP}=\frac{1+\sqrt{2}}{2}$, and $\text{GOSPA}=1+\sqrt{2}$ for both $\mathsf{M}_1$ and $\mathsf{M}_2$}.
\Rev{In contrast, the NLL also uses the uncertainty information available in the posteriors of the trackers to compare $\mathsf{M}_1$ and $\mathsf{M}_2$.} Specifically, since the state densities are Gaussian, we have that
\Rev{\begin{equation}
    f_{\mathsf{M}_k}(\mathbb Y) \approx 
    \prod_{i=1}^2 r_{i,k} \frac{e^{-\frac{1}{2}(y_i-\mu_{i,k})^\top\Sigma_{i,k}^{-1}(y_i-\mu_{i,k})}}{2\pi\sqrt{|\Sigma_{i,k}|}} 
\end{equation}
because the other possible matches between Bernoullis and predictions have negligible contribution. Hence, applying the definition of the NLL from \eqref{eq:nll_definition}, we obtain}

\begin{align}
    &\textrm{NLL}(\mathbb Y, f_{\mathsf{M}_k})\approx
    \sum_{i=1}^2-\log r_{i,k} + 
    \frac{1}{2}\log|\Sigma_{i,k}| \notag
    \\
    \label{eq:nll_example1}
    &+\frac{1}{2}(y_i-\mu_{i,k})^\top \Sigma_{i,k}^{-1} (y_i-\mu_{i,k}) +\log(2\pi),
\end{align}
where $\mu_{i,k}$ and $\Sigma_{i,k}$ are the mean and covariance matrix of the $i$-th Bernoulli component of $\mathsf{M}_k$, and $y_i$ is the ground-truth matched to that component according to the optimal match.
\Rev{Note how these three terms incorporate all the uncertainty information available in the posterior:}  $(y_i-\mu_{i,k})^\top \Sigma_{i,k}^{-1} (y_i-\mu_{i,k})$ 
penalizes overconfident state densities, 
$-\log r_{i,k}$ accounts for the quality of the existence probabilities, while $\log|\Sigma_{i,k}|$  penalizes underconfident state densities. 
Evaluating \eqref{eq:nll_example1} for $\mathsf{M}_1$ and $\mathsf{M}_2$ results in $\textrm{NLL}(\mathbb Y, f_{\mathsf{M}_1})\approx4.6$ and $\textrm{NLL}(\mathbb Y, f_{\mathsf{M}_2})\approx16.3$, in line with the intuition that $\mathsf{M}_1$'s predictions are considerably better than $\mathsf{M}_2$'s (lower NLL scores are better, indicating higher likelihood).
\begin{figure}
    \centering
    \includegraphics[width=0.4\textwidth]{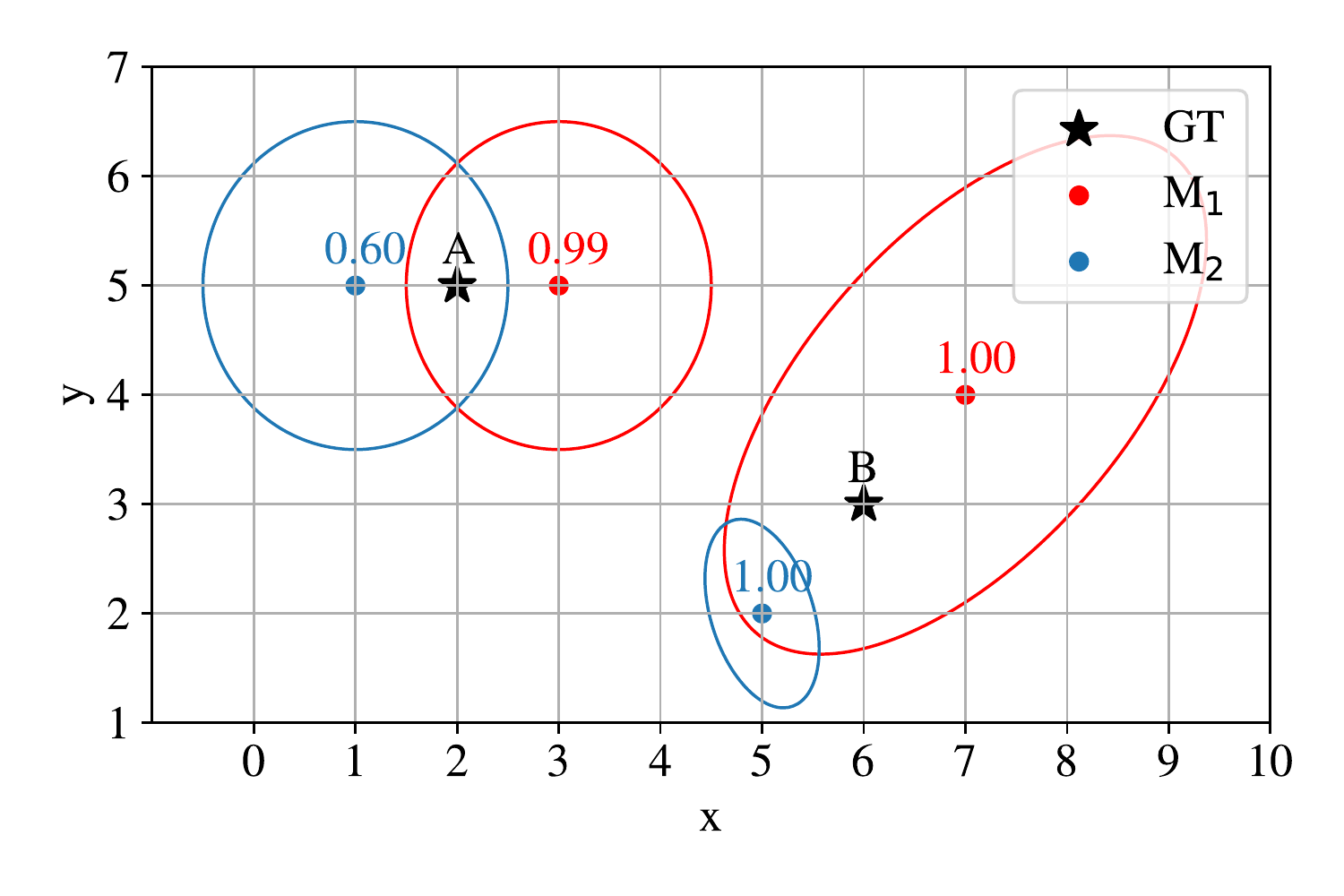}\vspace{-5mm}
    \caption{\emph{Example 1:} existence probabilities and 2-sigma regions for the Bernoulli components of $\mathsf{M}_1$ and $\mathsf{M}_2$, along with ground-truth object states as black stars.}
    \label{fig:example_1}
    \vspace{-4mm}
\end{figure}
\begin{figure}
    \centering
    \includegraphics[width=0.4\textwidth]{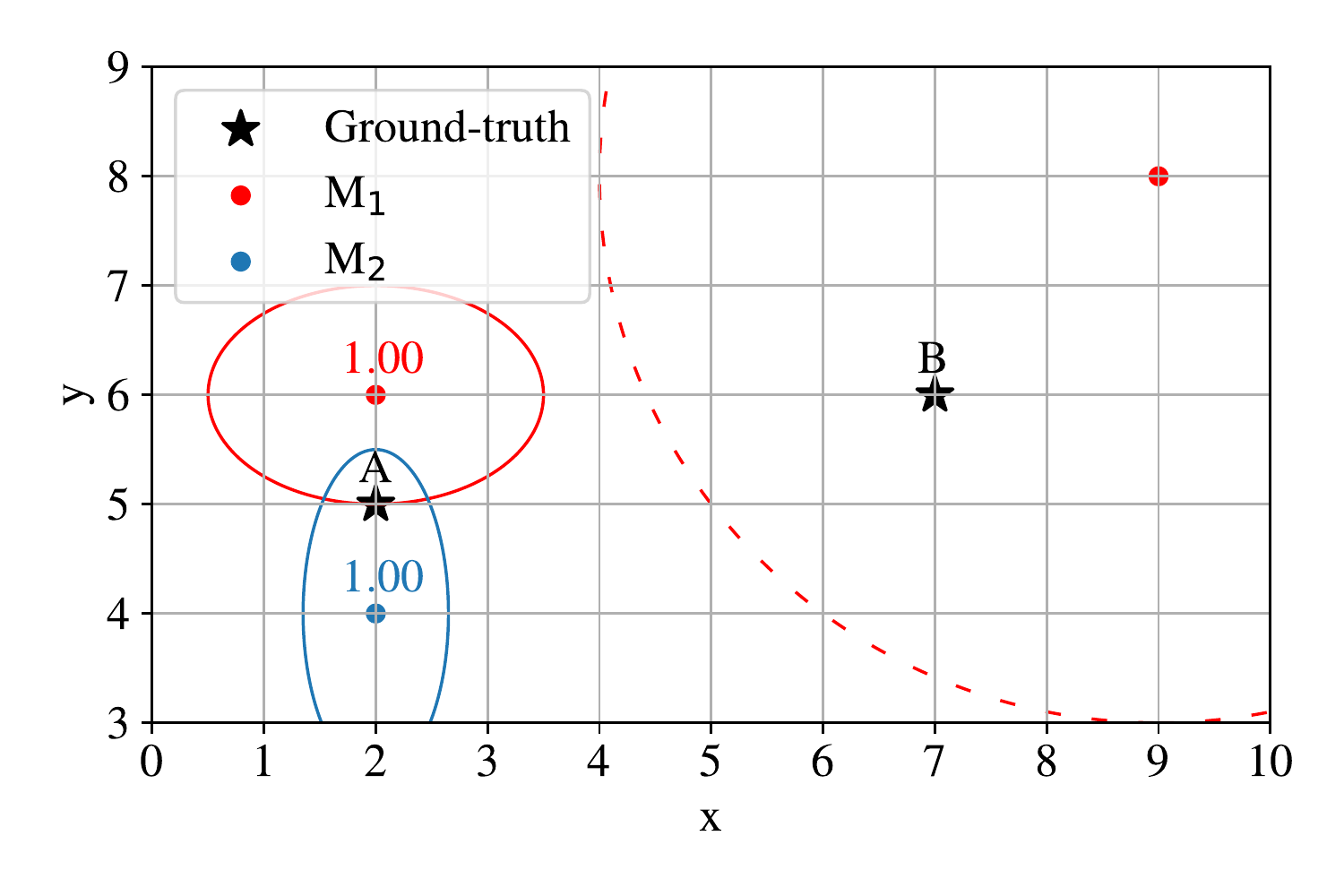}\vspace{-5mm}
    \caption{\emph{Example 2:} now $\mathsf{M}_1$ has an additional PPP component, with 2-sigma region of the intensity function $\lambda(\cdot)$ illustrated as a dashed ellipse. Same legend as Fig.\,\ref{fig:example_1}.  }
    \label{fig:example_2}
    \vspace{-4mm}
\end{figure}

\subsubsection*{Example 2}
In this example the state is also the 2-dimensional position of the objects, this time with $\mathbb Y=\{(2, 5), (7,6)\}$. \Rev{$\mathsf{M}_1$ is now a PMB filter (PMB density as its posterior), while $\mathsf{M}_2$ is a Bernoulli mixture with a single component, illustrated in Fig.\,\ref{fig:example_2}}. The PPP component of $\mathsf{M}_1$ has an intensity function $\lambda(\cdot)$ which is Gaussian, and its 2-sigma region is illustrated as a dashed ellipse.
From the figure, we see that the posterior from $\mathsf{M}_1$ is superior in this case too, being vastly better at explaining the ground-truth than $\mathsf{M}_2$. Although both trackers localize object A equally well, $\mathsf{M}_2$'s posterior is unable to explain any $\mathbb Y$ such that $|\mathbb Y|>1$ (likelihood 0), whereas the PPP component of $\mathsf{M}_1$ is able to explain any number of missed objects. Extracting estimates from these posteriors as in Example 1 results in $\mathbb X_{\mathsf{M}_1}=\{(2, 6)\}$ and $\mathbb X_{\mathsf{M}_2}=\{(2, 4)\}$. \Rev{Again, MOTA, MOTP, and GOSPA all rank $\mathsf{M}_1$ and $\mathsf{M}_2$ equally, since both estimates are equidistant to ground-truth object A, and both trackers miss the ground-truth object B: $\text{MOTA}=0.5$, $\text{MOTP}=1$, and $\text{GOSPA}=2$ for both trackers.} Therefore, all these performance measures completely miss the fact that according to $f_{\mathsf{M}_2}$ the realization $\mathbb Y$ is impossible.
\Rev{In contrast, the NLL performance measure takes into account the available uncertainty information, revealing the correct ranking of these trackers}. In this case, $\textrm{NLL}(\mathbb Y, f_{\mathsf{M}_2})=\infty$, since $(f_{\mathsf{M}_2}(\mathbb Y)=0)$, whereas
\begin{align}
    \textrm{NLL}(\mathbb Y, f_{\mathsf{M}_1})&\approx-\log \big(r_1 p_1(y_A)\big) + 1-\log\lambda(y_B) \approx 8.2, \notag
\end{align}
therefore showing that the NLL performance measure correctly ranks $\mathsf{M}_1$ to be the best tracker in this context.
\vspace{-3mm}

\section{Conclusion}
We proposed the use of the negative log-likelihood of a model as a MOT performance measure and showed that it incorporates all available uncertainties in the evaluation in a sound mathematical manner, without the need of any hyperparameters. 
We also provided efficient algorithms for approximating the NLL of common MOT methods, along with a special case for the PMB density where the NLL decomposes and is an approximate generalization of the GOSPA metric. Examples indicate that NLL is better at capturing the expected performance than conventional performance measures. 
\Rev{The adoption of NLL can hopefully provide a fairer way of comparing MOT methods and guide novel designs, especially for deep learning methods, which often fail to provide uncertainty estimates and thus have poor NLL.}

\appendices
\section{Proof of Theorem 1} \label{sec:ProofThm1}
Starting from \eqref{eq:pmb_nll_decomposition}:
\begin{align*}
        &\textrm{NLL}(\mathbb Y,f_\text{PMB})\approx\min_{\gamma\in\Gamma}-\sum_{(i,j)\in\gamma}\log r_ip_i(y_j)
        \\
        &-\sum_{i\in\mathbb F(\gamma)}\log(1-r_i)+\int \lambda(y')\textrm{d}y'-\sum_{j\in\mathbb M(\gamma)}\log\lambda(y_j)\notag
\end{align*}
and invoking assumptions (i)--(iv), we obtain
\begin{align*}
    &\stackrel{\text{(i)}}{=}\min_{\gamma\in\Gamma}-\sum_{(i,j)\in\gamma}\log r_ip_i(y_j)
\\
    &\qquad-\sum_{i\in\mathbb F(\gamma)}\log(1-r_i)+\bar\lambda -\left( \log\frac{\bar\lambda}{V}\right )(|\mathbb Y|-|\gamma|)
\\
    &\stackrel{\text{(ii)}}{=}\min_{\gamma\in\Gamma}-\sum_{(i,j)\in\gamma}\Big(\log p_i(y_j)+\log\rho\Big)
\\    
    &\qquad-\log(1-\rho)(|\mathbb X|-|\gamma|)+\bar\lambda -\left( \log\frac{\bar\lambda}{V}\right )(|\mathbb Y|-|\gamma|)
\\
    &\stackrel{\text{(iii)}}{=}\min_{\gamma\in\Gamma}
    \sum_{(i,j)\in\gamma}\Big(D(x_i,y_j)+\log\frac{k}{\rho}\Big)
\\
    &\qquad-\log(1-\rho)(|\mathbb X|-|\gamma|)+\bar\lambda -\left( \log\frac{\bar\lambda}{V}\right )(|\mathbb Y|-|\gamma|)
\\
    &\stackrel{\text{(iv)}}{=}V(1-\rho)+\min_{\gamma\in\Gamma}
    \sum_{(i,j)\in\gamma}\Big(D(x_i,y_j)+\log\frac{k}{\rho}\Big)
\\
    &+\frac{c}{2}(|\mathbb X|+|\mathbb Y|-2|\gamma|)~.
\end{align*}

\cleardoublepage
\balance
\bibliographystyle{IEEEtran}
\bibliography{refs}

\begin{thebibliography}{10}
\providecommand{\url}[1]{#1}
\csname url@samestyle\endcsname
\providecommand{\newblock}{\relax}
\providecommand{\bibinfo}[2]{#2}
\providecommand{\BIBentrySTDinterwordspacing}{\spaceskip=0pt\relax}
\providecommand{\BIBentryALTinterwordstretchfactor}{4}
\providecommand{\BIBentryALTinterwordspacing}{\spaceskip=\fontdimen2\font plus
\BIBentryALTinterwordstretchfactor\fontdimen3\font minus
  \fontdimen4\font\relax}
\providecommand{\BIBforeignlanguage}[2]{{%
\expandafter\ifx\csname l@#1\endcsname\relax
\typeout{** WARNING: IEEEtran.bst: No hyphenation pattern has been}%
\typeout{** loaded for the language `#1'. Using the pattern for}%
\typeout{** the default language instead.}%
\else
\language=\csname l@#1\endcsname
\fi
#2}}
\providecommand{\BIBdecl}{\relax}
\BIBdecl

\bibitem{bar2004estimation}
Y.~Bar-Shalom, X.~R. Li, and T.~Kirubarajan, \emph{Estimation with applications
  to tracking and navigation: theory algorithms and software}.\hskip 1em plus
  0.5em minus 0.4em\relax John Wiley \& Sons, 2004.

\bibitem{challa2011fundamentals}
S.~Challa, M.~R. Morelande, D.~Mu{\v{s}}icki, and R.~J. Evans,
  \emph{Fundamentals of object tracking}.\hskip 1em plus 0.5em minus
  0.4em\relax Cambridge University Press, 2011.

\bibitem{meyer2018message}
F.~Meyer, T.~Kropfreiter, J.~L. Williams, R.~Lau, F.~Hlawatsch, P.~Braca, and
  M.~Z. Win, ``Message passing algorithms for scalable multitarget tracking,''
  \emph{Proceedings of the IEEE}, vol. 106, no.~2, pp. 221--259, 2018.

\bibitem{streit2021analytic}
R.~L. Streit, R.~B. Angle, and M.~Efe, \emph{Analytic combinatorics in multiple
  object tracking}.\hskip 1em plus 0.5em minus 0.4em\relax Springer, 2021.

\bibitem{mahler2007statistical}
R.~P. Mahler, \emph{Statistical Multisource-Multitarget Information
  Fusion}.\hskip 1em plus 0.5em minus 0.4em\relax Artech House Norwood, MA,
  2007.

\bibitem{bar1990multitarget}
Y.~Bar-Shalom, \emph{Multitarget-multisensor tracking: advanced
  applications}.\hskip 1em plus 0.5em minus 0.4em\relax Norwood, 1990.

\bibitem{uncertainty_in_machine_learning}
S.~Shafaei, S.~Kugele, M.~H. Osman, and A.~Knoll, ``Uncertainty in machine
  learning: A safety perspective on autonomous driving,'' in
  \emph{International Conference on Computer Safety, Reliability, and
  Security}.\hskip 1em plus 0.5em minus 0.4em\relax Springer, 2018, pp.
  458--464.

\bibitem{evaluating_mot_performance}
\BIBentryALTinterwordspacing
O.~E. Drummond and B.~E. Fridling, ``{Ambiguities in evaluating performance of
  multiple target tracking algorithms},'' in \emph{Signal and Data Processing
  of Small Targets 1992}, O.~E. Drummond, Ed., vol. 1698, International Society
  for Optics and Photonics.\hskip 1em plus 0.5em minus 0.4em\relax SPIE, 1992,
  pp. 326 -- 337. [Online]. Available: \url{https://doi.org/10.1117/12.139399}
\BIBentrySTDinterwordspacing

\bibitem{wasserstein_hausdorff_metrics}
J.~R. Hoffman and R.~P. Mahler, ``Multitarget miss distance via optimal
  assignment,'' \emph{IEEE Transactions on Systems, Man, and Cybernetics-Part
  A: Systems and Humans}, vol.~34, no.~3, pp. 327--336, 2004.

\bibitem{clearmot}
K.~Bernardin and R.~Stiefelhagen, ``Evaluating multiple object tracking
  performance: the clear mot metrics,'' \emph{EURASIP Journal on Image and
  Video Processing}, vol. 2008, pp. 1--10, 2008.

\bibitem{luiten2021hota}
J.~Luiten, A.~Osep, P.~Dendorfer, P.~Torr, A.~Geiger, L.~Leal-Taix{\'e}, and
  B.~Leibe, ``Hota: {A} higher order metric for evaluating multi-object
  tracking,'' \emph{International journal of computer vision}, vol. 129, no.~2,
  pp. 548--578, 2021.

\bibitem{schuhmacher2008consistent}
D.~Schuhmacher, B.-T. Vo, and B.-N. Vo, ``A consistent metric for performance
  evaluation of multi-object filters,'' \emph{IEEE transactions on signal
  processing}, vol.~56, no.~8, pp. 3447--3457, 2008.

\bibitem{rahmathullah2017generalized}
A.~S. Rahmathullah, {\'A}.~F. Garc{\'\i}a-Fern{\'a}ndez, and L.~Svensson,
  ``Generalized optimal sub-pattern assignment metric,'' in \emph{20th
  International Conference on Information Fusion (Fusion)}.\hskip 1em plus
  0.5em minus 0.4em\relax IEEE, 2017, pp. 1--8.

\bibitem{mot20}
\BIBentryALTinterwordspacing
P.~Dendorfer, H.~Rezatofighi, A.~Milan, J.~Shi, D.~Cremers, I.~Reid, S.~Roth,
  K.~Schindler, and L.~Leal-Taix\'{e}, ``Mot20: A benchmark for multi object
  tracking in crowded scenes,'' \emph{arXiv:2003.09003[cs]}, Mar. 2020, arXiv:
  2003.09003. [Online]. Available: \url{http://arxiv.org/abs/1906.04567}
\BIBentrySTDinterwordspacing

\bibitem{nees}
X.~R. Li, Z.~Zhao, and V.~P. Jilkov, ``Estimator’s credibility and its
  measures,'' in \emph{Proc. IFAC 15th World Congress}, 2002.

\bibitem{nagappa2011incorporating}
S.~Nagappa, D.~E. Clark, and R.~Mahler, ``Incorporating track uncertainty into
  the {OSPA} metric,'' in \emph{14th International Conference on Information
  Fusion}.\hskip 1em plus 0.5em minus 0.4em\relax IEEE, 2011, pp. 1--8.

\bibitem{he2013track}
X.~He, R.~Tharmarasa, T.~Kirubarajan, and T.~Thayaparan, ``A track quality
  based metric for evaluating performance of multitarget filters,'' \emph{IEEE
  Transactions on Aerospace and Electronic Systems}, vol.~49, no.~1, pp.
  610--616, 2013.

\bibitem{severini2000likelihood}
T.~A. Severini, \emph{Likelihood methods in statistics}.\hskip 1em plus 0.5em
  minus 0.4em\relax Oxford University Press, 2000.

\bibitem{myung2003tutorial}
I.~J. Myung, ``Tutorial on maximum likelihood estimation,'' \emph{Journal of
  mathematical Psychology}, vol.~47, no.~1, pp. 90--100, 2003.

\bibitem{mahler2003multitarget}
R.~P. Mahler, ``Multitarget {B}ayes filtering via first-order multitarget
  moments,'' \emph{IEEE Transactions on Aerospace and Electronic systems},
  vol.~39, no.~4, pp. 1152--1178, 2003.

\bibitem{mahler2007phd}
R.~Mahler, ``{PHD} filters of higher order in target number,'' \emph{IEEE
  Transactions on Aerospace and Electronic systems}, vol.~43, no.~4, pp.
  1523--1543, 2007.

\bibitem{garcia2018poisson}
{\'A}.~F. Garc{\'\i}a-Fern{\'a}ndez, J.~L. Williams, K.~Granstr{\"o}m, and
  L.~Svensson, ``Poisson multi-{B}ernoulli mixture filter: direct derivation
  and implementation,'' \emph{IEEE Transactions on Aerospace and Electronic
  Systems}, vol.~54, no.~4, pp. 1883--1901, 2018.

\bibitem{williams2015marginal}
J.~L. Williams, ``Marginal multi-{B}ernoulli filters: {RFS} derivation of
  {MHT}, {JIPDA}, and association-based {MeMBer},'' \emph{IEEE Transactions on
  Aerospace and Electronic Systems}, vol.~51, no.~3, pp. 1664--1687, 2015.

\bibitem{garcia2019gaussian}
{\'A}.~F. Garc{\'\i}a-Fern{\'a}ndez, Y.~Xia, K.~Granstr{\"o}m, L.~Svensson, and
  J.~L. Williams, ``Gaussian implementation of the multi-{B}ernoulli mixture
  filter,'' in \emph{22th International Conference on Information Fusion
  (FUSION)}.\hskip 1em plus 0.5em minus 0.4em\relax IEEE, 2019, pp. 1--8.

\bibitem{vo2013labeled}
B.-T. Vo and B.-N. Vo, ``Labeled random finite sets and multi-object conjugate
  priors,'' \emph{IEEE Transactions on Signal Processing}, vol.~61, no.~13, pp.
  3460--3475, 2013.

\bibitem{assignment_problem_complexity}
D.~F. Crouse, ``On implementing 2{D} rectangular assignment algorithms,''
  \emph{IEEE Transactions on Aerospace and Electronic Systems}, vol.~52, no.~4,
  pp. 1679--1696, 2016.

\bibitem{deep_learning_detection_with_uncertainty}
A.~Harakeh, M.~Smart, and S.~L. Waslander, ``Bayesod: {A} {B}ayesian approach
  for uncertainty estimation in deep object detectors,'' in
  \emph{{ICRA}}.\hskip 1em plus 0.5em minus 0.4em\relax {IEEE}, 2020, pp.
  87--93.

\end{thebibliography}

\end{document}